
\documentclass[oneside,english]{amsart}
\usepackage[T1]{fontenc}
\usepackage[latin9]{inputenc}
\usepackage{geometry}
\geometry{verbose,letterpaper,tmargin=3cm,bmargin=2cm,lmargin=2cm,rmargin=2cm}
\usepackage{amssymb}
\usepackage{graphicx,color}

\makeatletter
\numberwithin{equation}{section} 
\numberwithin{figure}{section} 
  \theoremstyle{plain}
  \newtheorem*{thm*}{Theorem}
  \theoremstyle{plain}
  \newtheorem*{lem*}{Lemma}
  \theoremstyle{plain}
  
  \theoremstyle{remark}
  \newtheorem*{acknowledgement*}{Acknowledgement}

\usepackage{babel}
\makeatother

\begin{document}

\title{TransverseDiff gravity is to scalar-tensor as unimodular gravity
is to General Relativity.}

\author{J.J. Lopez-Villarejo}

\date{4th January, 2011}

\address{European Organization for Nuclear Research, CERN CH-1211, Geneve 23, Switzerland
\and \newline Departamento de F\'{i}sica Te\'{o}rica and Instituto de F\'{i}sica Te\'{o}rica UAM/CSIC, E-28049-Madrid, Spain}

\email{jj.lopezvillarejo@cern.ch; jjlopezvillarejo@gmail.com}

\begin{abstract}
Transverse Diffeomorphism (TDiff) theories are well-motivated theories
of gravity from the quantum perspective, which are based upon a gauge
symmetry principle. The main contribution of this work is to firmly
establish a correspondence between TransverseDiff and the better-known
scalar-tensor gravity --- in its more general form ---, a relation
which is completely analogous to that between unimodular gravity and
General Relativity. We then comment on observational aspects of TDiff.
In connection with this proof, we derive a very general rule that
determines under what conditions the procedure of fixing a gauge symmetry
can be equivalently applied before the variational principle leading
to the equations of motion, as opposed to the standard procedure,
which takes place afterwards; this rule applies to gauge-fixing terms
without derivatives. 
\\
Keywords: Transverse Diffeomorphism gravity; Gauge-fixing; Unimodular gravity; Scalar-tensor gravity; Variations with restrictions.

\end{abstract}
\maketitle

\section{Introduction}

Symmetry principles have a great relevance in modern Physics theories.
Specifically, there is a widespread point of view on General Relativity (GR),
stemming from its possible quantum interpretation, that considers
it as the theory of spacetime diffeomorphism (Diff) invariance for
spin-two massless particles: one starts from a linearized gravity
action for a symmetric rank-two tensor in Minkowski space and, under
the guidance of the Diff symmetry principle, one is ``naturally''
led to the full Einstein theory in the non-linear regime \cite{BoulwareDeser:1974sr, Feynman:lect_gravit}
(although this has been challenged in \cite{Padmanabhan:2004xk}). Under this view, the \emph{external}
symmetry group of (active) diffeomorphisms acts as a \emph{gauge}
group which proves essential for eliminating the unphysical ill-defined
polarization modes (ghosts) of the massless graviton. This perspective on gravity deviates from Einstein's geometrical intuition, which is recovered as a by-product. For our purposes in particular, the metric (field) $g_{\mu \nu}$ may be better understood as simply a rank-two tensor (field), and the covariant derivatives that will appear are convenient book-keeping forms for expressions involving the Christoffel symbols of $g_{\mu \nu}$.

On the other hand, it was already pointed out that the necessary and
sufficient symmetry group for a consistent description of the massless
graviton is not the full group of diffeomorphisms (Diff(M)) but rather
a maximal subgroup of it \cite{BijDamNg:1981ym}. Transformations
belonging to this subgroup have been dubbed \emph{transverse} (TDiff(M))
\cite{AlvarezGarriga:2006uu,AlvarezAnton:2007nn}, since the parameter
describing the gauge transformation at this linear level is transverse,
\emph{$\partial_{\mu}\xi^{\mu}=0$}, such as in the (infinitesimal) transformation
law for the linearized metric:\begin{equation}
h_{\mu\nu}(x)\rightarrow h_{\mu\nu}(x)+\partial_{\mu}\xi_{\nu}(x)+\partial_{\nu}\xi_{\mu}(x),\,\, with\,\,\partial_{\mu}\xi^{\mu}(x)=0.\label{eq:linear TDiff transformation}\end{equation}
At the non-linear level (the analogous to GR), one can maintain consistently the same restriction \emph{$\partial_{\mu}\xi^{\mu}=0$}
on the symmetry group\footnote{To go from the linear to the non-linear  theory, one could try to repeat the procedure in \cite{BoulwareDeser:1974sr}. This has been attempted elsewhere unsuccessfully and, moreover, there are serious concerns about the uniqueness of the procedure already for the case of GR \cite{Padmanabhan:2004xk}. Instead, we look for a consistent and ``logical'' generalization of the linear constraint on the gauge parameter in \eqref{eq:linear TDiff transformation}, which has the latter as its linear limit: this non-linear constraint turns out to be exactly the same. The other ``most sensible''
generalization, where $\xi^{\mu}$ is restricted with $\nabla_{\mu}\xi^{\mu}=0$
(and $g$ does not transform at all, not even as a scalar does) is
doomed since one cannot accommodate a kinetic term for $g$: $\frac{1}{2}f_{k}(g)g^{\mu\nu}\partial_{\mu}g\partial_{\nu}g$ does not transform as a scalar.%
} so that, for example, the (infinitesimal) transformation of the metric reads:\begin{equation}
g_{\mu\nu}(x)\rightarrow g_{\mu\nu}(x)+\nabla_{\mu}\xi_{\nu}(x)+\nabla_{\nu}\xi_{\mu}(x),\,\,\,\,\, with\,\,\partial_{\mu}\xi^{\mu}(x)=0.\label{eq:non-linear TDiff transformation}\end{equation}
For a finite transformation, this corresponds to the subgroup of (finite)
general coordinate transformations with Jacobian determinant equal
to unity.

Following under this {}``quantum view'', consider that the true
symmetry of nature was TDiff, instead of Diff. Then, one is impelled
to construct the most general action compatible with TDiff symmetry.
The result is that, since $g\equiv-det(g_{\mu\nu})$ behaves as a
scalar under this subgroup, one can add arbitrary functions of $g$
to the usual terms of the GR Lagrangian \cite{AlvarezAnton:2007nn}
(it may be useful for the reader to consult Appendix A for what follows).
Therefore, the most general Lagrangian invariant under transverse
diffeomorphisms and of second order in derivatives reads:\begin{equation}
S_{\text{TDiff}}=-\frac{1}{2\kappa^{2}}\int d^{4}x\sqrt{g}\left[f\left(\sqrt{g}-1\right)R+2f_{\lambda}\left(\sqrt{g}-1\right)\Lambda+\frac{1}{2}f_{k}\left(\sqrt{g}-1\right)g^{\mu\nu}\partial_{\mu}\sqrt{g}\partial_{\nu}\sqrt{g}\right]+S_{M}\label{eq:TDiff action}\end{equation}
and the matter action may be taken to be of the form\begin{equation}
S_{M}=\int d^{4}x\sqrt{g}L_{SM}\left[\psi_{m},g_{\mu\nu};\,\sqrt{g}-1\right]\label{eq:TDiff matter action}\end{equation}
where we allow for an arbitrary extra-dependence on $g$ in the conventional
Standard Model Lagrangian%
\footnote{For example, the Electromagnetic Lagrangian is generalized to $-\frac{1}{4}f_{EM}(\sqrt{g}-1)F_{\mu\nu}F^{\mu\nu}$%
}; also, we have expressed this dependence on $\sqrt{g}-1$ instead
of $g$ for later convenience. As it stands, the action given by \eqref{eq:TDiff action},
\eqref{eq:TDiff matter action} is clearly non-covariant, but is formulated
in a specific set of coordinates --- up to TDiff transformations of
coordinates. The quantum ultraviolet behavior of such a theory was
discussed in \cite{AlvarezAntonVillarejo:2008zw}; its observational
signatures were addressed in \cite{AlvarezAntonVillarejo:2009ga}
--- we will have some comments to add under the light shed by our
findings. 

This way of proceeding from the TDiff symmetry principle is not, however,
the standard route that followed from the work of van der Bij, van Dam \& Ng \cite{BijDamNg:1981ym}.
Indeed, these same authors started to develop what is known as \emph{unimodular
gravity} (recent proposals include \cite{Earman:2003zj,Smolin:2009ti}).
We compare \& contrast these two approaches in section 2. 

On another note, there is a significant lesson in the work by Kretschmann
\cite{Kretschmann:1917}, when objecting to Einstein's views: every
spacetime theory admits a generally covariant representation (see
also \cite{Norton:2003cx,Nortoneightdecades:1993}). For TDiff, a
representation of this kind can be found, e.g., in \cite{Pirogov:2009hr}.
It introduces an absolute prior-geometric object \cite{LeeFoundations:1973},
namely a background density $\tilde{g}$, so that all occurrences
of $g$ in the scalar Lagrangian%
\footnote{Note that we have implicitly defined the Lagrangian without the usual
factor $\sqrt{g}$.%
} appear in the covariant form $g/\tilde{g}$ (then again, when the
coordinates employed are such that $\tilde{g}$ equals the unit scalar
density we recover straightforwardly our action \eqref{eq:TDiff action}).
In this work, we go one step further to search for a covariant version of TDiff theory. We find that, under certain simple boundary conditions,
TDiff theory is equivalent to a \emph{general} scalar-tensor theory with a specific gauge-fixing. Moreover, we claim that  --- boundary conditions provided --- these two (set of) theories are ``physically equivalent", to the same
extent that General Relativity constrained to the harmonic gauge ---
or any other gauge --- corresponds in its physical consequences to
General Relativity in its covariant standard form (we are just {}``fixing
the gauge''):
\[
\delta\left(\int d^{4}x\sqrt{g}R\right)=0\,\,\,\,\,\,\underset{"physically\, equivalent"}{\Longleftrightarrow}\left[\delta\left(\int d^{4}x\sqrt{g}R\right)=0\right]_{\partial_{\mu}(\sqrt{g}g^{\mu\nu})=0}\]
By \emph{general} scalar-tensor theory we mean that the gravitational
scalar is arbitrarily present in all terms of the Lagrangian. This contrasts with the standard presentation of scalar-tensor theory \cite{Jordan:1955,Fierz:1956zz,BransDicke:1961sx},
which employs the principle of \emph{universal coupling} in the matter
Lagrangian, or the equivalent concept of a \emph{metric theory} of
gravity (see for example \cite{LeeFoundations:1973,WillLivingRev:2006}), and where the occurrence of the gravitational scalar in the matter Lagrangian is limited to a certain type of terms. In passing, these prescriptions were developed to comply with the (Weak) Equivalence Principle in this theory; see however \cite{Damour:2001fn} regarding their convenience.

The idea of a correspondence between TDiff and general scalar-tensor
gravity has already been touched upon in previous studies \cite{AlvarezAntonVillarejo:2008zw,AlvarezAntonVillarejo:2009ga},
and has even made use of, but now the aim is to definitively establish
it; we address it in section 3. 

In sections 4 and 5, we deal with particular aspects of this correspondence:
the energy-momentum tensors and flat limits in both theories. Finally,
we conclude in section 6.

\section{TDiff Gravity vs. Unimodular theories}

The TransverseDiff symmetry at the linear level can be implemented
in two different fashions in the Lagrangian, which give rise to two
very different theories at the non-linear level.

The first route was explored in the aforementioned seminal paper \cite{BijDamNg:1981ym}.
One should note that the transversality condition in \eqref{eq:linear TDiff transformation}
is directly affecting the transformation rule of the trace of $h_{\mu\nu}$:
$h_{\mu}^{\mu}\rightarrow h_{\mu}^{\mu}+\partial_{\mu}\xi^{\mu}$,
such that this quantity is an invariant. Therefore, one can decide
to restrict the value of the trace in the theory to have a fixed value,
typically $h_{\mu}^{\mu}=0$. In the full non-linear regime, this
naturally translates into the condition $g\equiv-det(g_{\mu\nu})=1$,
which explains the name \emph{unimodular gravity}, but one could also
fix the determinant of the metric to any arbitrary function. 

Therefore, unimodular gravity is based upon a reduction of the functional
space on which the Einstein-Hilbert action is defined:\[
g_{\mu\nu}\rightarrow\hat{g}_{\mu\nu},\,\, with\, det(\hat{g}_{\mu\nu})=\epsilon_{0}\]
\begin{equation}
S_{EH}[g_{\mu\nu}]\equiv-\frac{1}{2\kappa^{2}}\int d^{4}x\sqrt{g}R[g_{\mu\nu}]\rightarrow S_{UG}[\hat{g}_{\mu\nu}]\equiv-\frac{1}{2\kappa^{2}}\int d^{4}x\epsilon_{0}R[\hat{g}_{\mu\nu}]\label{eq:unimodular gravity}\end{equation}
where $\epsilon_{0}$ is some fixed scalar density, usually taken
unity. The symmetries of this action are the {}``volume preserving
diffeomorphisms (VPD)\textquotedblright{}, which respect $g(x)=g'(x)$
(and at the infinitesimal level $\delta_{\xi}g=\nabla_{\mu}\xi^{\mu}=\partial_{\mu}(\sqrt{g}\xi^{\mu})=0$).
Note that these are not the same as \eqref{eq:non-linear TDiff transformation}
unless $\epsilon_{0}(x)=const$. We have 9 e.o.m.'s $\delta S/\delta\hat{g}_{\mu\nu}=0$
(of which only 6 are independent, in virtue of the 3 functional-parameter VPD symmetry) for
9 functions $\hat{g}_{\mu\nu}$: the result is a 3 functional-parameter space of solutions. 
This theory, as it turns out, resembles General Relativity with a gauge-fixing constraint on the metric. 
In fact, the only difference at the classical level is an arbitrary integration constant that appears,
and which plays the role of a cosmological constant (see Appendix
B).

On the other hand, the approach that we are interested in, named {}``TDiff
gravity'' after the symmetry, is{\small{} }based upon application
of a symmetry principle on the full space of metrics $g_{\mu\nu}$
(see \eqref{eq:TDiff action} and \eqref{eq:non-linear TDiff transformation}).
There are 10 e.o.m.'s: $\delta S/\delta g_{\mu\nu}=0$ (only 7 independent
in virtue of TDiff-based Bianchi identities), for 10 functions $g_{\mu\nu}$.
This theory is connected to scalar-tensor theory with a gauge-fixing
constraint, as we will see.

\section{Correspondence between TDiff gravity and a general scalar-tensor
theory.}

\begin{thm*}
Of correspondence between scalar-tensor and TDiff gravity.

Let us have a TDiff theory of the general form \eqref{eq:TDiff action}\eqref{eq:TDiff matter action},
abbreviated as \[
S_{\text{TDiff}}=\int d^{4}x\sqrt{g(x)}\mathcal{L}(g_{\mu\nu}(x),\sqrt{g(x)}-1;\,\psi(x)),\]
formulated in an open patch of coordinates $B$, with boundary $\partial B$,
and where the $\psi(x)$ stands for a generic matter field. Consider
now the corresponding scalar-tensor theory with action \[
S_{ST}=\int d^{4}x\sqrt{g(x)}\mathcal{L}(g_{\mu\nu}(x),\phi(x);\,\psi(x)),\]
where the explicit dependency on the determinant of the metric $\sqrt{g}-1$
has been replaced by the new scalar field $\phi$. Then, the set of
(classical) solutions in both theories that fulfill that $\mathcal{L}\rightarrow0$
at the boundary $\partial B$ is the same, provided that we consider
{}``physically equivalent'' all solutions in a gauge orbit; specifically,
we take the partial gauge fixing $\sqrt{g}-1=\phi$ on the scalar-tensor
solutions:{\small \begin{eqnarray*}
\{(g_{\mu\nu},\psi)|\,\delta S_{\text{TDiff}}[g_{\mu\nu};\psi]=0,\mathcal{\, L}(g_{\mu\nu},\sqrt{g}-1;\psi)|_{\partial B}=0\} & \equiv & \left.\{(g_{\mu\nu},\phi,\psi)|\,\delta S_{ST}[g_{\mu\nu},\phi;\psi]=0,\mathcal{\, L}(g_{\mu\nu},\phi;\psi)|_{\partial B}=0\}\right|_{\phi=\sqrt{g}-1}\end{eqnarray*}
}{\small \par}
\end{thm*}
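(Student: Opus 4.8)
The plan is to regard $S_{\mathrm{TDiff}}$ as the restriction of the genuinely diffeomorphism-invariant parent $S_{ST}$ to the algebraic slice $\phi=\sqrt g-1$ (in covariant terms $\phi=\sqrt{g/\tilde g}-1$, specialised to the coordinates in which the background density $\tilde g=1$), and to recognise this slice as a \emph{partial gauge fixing}: because $\sqrt g$ transforms as a scalar precisely under $\mathrm{TDiff}$ (unit Jacobian), the condition $\phi=\sqrt g-1$ is left invariant exactly by the residual $\mathrm{TDiff}$ subgroup while it freezes the ``divergence'' part of a general diffeomorphism. The statement is then an instance of ``gauge-fix before versus after the variation,'' proved by two inclusions. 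The easy one: if $(g_{\mu\nu},\phi,\psi)$ solves all $S_{ST}$ equations and lies on the slice, then since $S_{\mathrm{TDiff}}[g,\psi]=S_{ST}[g,\sqrt g-1,\psi]$ the chain rule gives, on the slice, $\frac{\delta S_{\mathrm{TDiff}}}{\delta g_{\mu\nu}}=\frac{\delta S_{ST}}{\delta g_{\mu\nu}}+\frac12\sqrt g\,g^{\mu\nu}\frac{\delta S_{ST}}{\delta\phi}$ and $\frac{\delta S_{\mathrm{TDiff}}}{\delta\psi}=\frac{\delta S_{ST}}{\delta\psi}$, so the $S_{\mathrm{TDiff}}$ equations hold; and $\mathcal L(g,\sqrt g-1;\psi)=\mathcal L(g,\phi;\psi)$ on the slice, so the boundary hypothesis is inherited. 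That every $S_{ST}$ solution orbit actually meets the slice (attainability of the gauge) amounts to a single scalar equation for a coordinate change and is locally solvable exactly as the unimodular or harmonic gauge is; I would record this but it needs only the standard local argument.

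For the hard inclusion, let $(g_{\mu\nu},\psi)$ solve the $S_{\mathrm{TDiff}}$ equations with $\mathcal L|_{\partial B}\to 0$ and set $\phi:=\sqrt g-1$; write $E^{\mu\nu}:=\delta S_{ST}/\delta g_{\mu\nu}$ and $E_\phi:=\delta S_{ST}/\delta\phi$, both evaluated on the slice. The matter equations of $S_{ST}$ and of $S_{\mathrm{TDiff}}$ coincide, hence hold, and the chain rule above collapses the $S_{\mathrm{TDiff}}$ metric equation to the \emph{single} tensor relation, which I will call $(\ast)$,
\[
E^{\mu\nu}+\tfrac12\sqrt g\,g^{\mu\nu}E_\phi=0 .
\]
What must be shown is that $(\ast)$ forces $E^{\mu\nu}=0$ and $E_\phi=0$ separately, i.e.\ that the ``lost'' equation $E_\phi=0$ is a consequence. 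Here the gauge symmetry of the \emph{parent} enters: diffeomorphism invariance of the full $S_{ST}$ gives the contracted Bianchi identity of Noether's second theorem, $2\nabla_\mu E^{\mu\nu}=E_\phi\,\partial^\nu\phi+(\text{terms linear in the matter equations of motion})$, and on a $\mathrm{TDiff}$ solution the matter equations hold, so the bracket drops. Substituting $(\ast)$ and then $\phi=\sqrt g-1$ reduces this vector identity to a first-order, linear, \emph{homogeneous} equation for the single unknown $E_\phi$, whose integration fixes $E_\phi$ up to one overall multiplicative constant times an explicit nowhere-vanishing profile (in the reference coordinates one gets, concretely, $\sqrt g\,E_\phi=\mathrm{const}$). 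Hence either $E_\phi\equiv 0$ — and then $(\ast)$ yields $E^{\mu\nu}\equiv 0$, so $(g,\sqrt g-1,\psi)$ is the desired $S_{ST}$ solution on the slice — or $E_\phi$ is nowhere zero.

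Ruling out the second alternative is the crux, and the only place the boundary hypothesis is indispensable. The route I would take uses the diffeomorphism Ward identity of $S_{ST}$ in its ``up-to-a-boundary'' form, $E_A\,\delta_\xi\Phi^A=\partial_\mu\big(\xi^\mu\sqrt g\,\mathcal L-\Theta^\mu[\delta_\xi\Phi]\big)$ with $\Theta$ the symplectic potential, now taking a diffeomorphism $\xi$ that does \emph{not} vanish on $\partial B$: evaluating the left-hand side on the solution (using $(\ast)$ and the matter equations) makes it equal to $-\,\mathrm{const}\cdot\partial_\mu\xi^\mu$; integrating over $B$, the $\xi^\mu\sqrt g\,\mathcal L$ boundary term dies because $\mathcal L|_{\partial B}=0$, and the surviving identity ties the constant to a boundary flux of $\xi$ and of $\Theta$, which a suitable choice of $\xi$ can be made to force to vanish only if the constant does. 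This last step is exactly the general ``gauge-fix-before-or-after'' criterion advertised in the introduction (valid for derivative-free gauge conditions), and making it watertight — controlling the $\Theta$-flux on $\partial B$, and pinning down how much regularity beyond ``$\mathcal L\to 0$'' (e.g.\ the fields tending to a genuine vacuum there) is really needed — is the step I expect to be the main obstacle. Granting it, $E_\phi\equiv 0$, $(\ast)$ gives $E^{\mu\nu}\equiv 0$, the matter equations already hold, so $(g_{\mu\nu},\sqrt g-1,\psi)$ is an $S_{ST}$ solution lying on the gauge slice and satisfying $\mathcal L|_{\partial B}=0$, which completes the nontrivial inclusion and the theorem.
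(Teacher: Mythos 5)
Your proposal is correct and, at bottom, runs on the same mechanism as the paper's proof, but the packaging differs enough to be worth spelling out. The paper imposes the slice with a Lagrange multiplier term $S_{\lambda}=\int d^{4}x\,\lambda\,(\sqrt{g}-1-\phi)$ and proves $\lambda=0$ on shell by evaluating \emph{unbounded} gauge variations $\tilde{\delta}_{\xi}$ (with $\xi$ not vanishing on $\partial B$) of $S_{ST}$ and of $S_{T}=S_{ST}+S_{\lambda}$, the boundary condition $\mathcal{L}|_{\partial B}=0$ killing the flux terms; a separate lemma checks that the boundary condition itself commutes with the variation. You bypass the multiplier and work directly with $E_{\phi}=\delta S_{ST}/\delta\phi$ --- which is exactly what $\lambda$ equals on shell via $\delta S_{T}/\delta\phi=0$ --- extract $\partial_{\nu}(\sqrt{g}\,E_{\phi})=0$ from the contracted Bianchi identity of the Diff-invariant parent, and then kill the constant with a boundary-non-vanishing diffeomorphism in the Ward identity. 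These are the same computation read from opposite ends: restricting the paper's condition $\int\lambda\sqrt{g}\,\partial_{\mu}\xi^{\mu}=0$ to bounded $\xi$ gives precisely your $\partial_{\mu}(\lambda\sqrt{g})=0$, and allowing unbounded $\xi$ is precisely your flux argument. Your route has the merit of making the unimodular-style integration constant explicit (and of raising gauge attainability, which the paper only treats in Appendix C); the paper's multiplier formalism is what lets it promote the result to the general ``gauge-fix before or after the variation'' rule. Finally, the obstacle you flag as the crux --- controlling the symplectic-potential flux $\Theta^{\mu}$ on $\partial B$, since $\mathcal{L}\to 0$ does not by itself force $\Theta^{\mu}\to 0$ --- is exactly the step the paper also passes over lightly (``the additional boundary term \ldots{} is zero in virtue of the restriction,'' ``unless subtleties involved''), so your honest caveat does not mark a gap that the published argument actually closes.
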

\begin{proof}
We start with a lemma
\begin{lem*}
By using the Lagrange multipliers procedure, one can establish the
following exact correspondence:\begin{equation}
\delta S_{\text{TDiff}}\equiv\left.\delta\right|_{\phi=\sqrt{g}-1}S_{ST}=0\Longleftrightarrow\delta S_{T}\equiv\delta(S_{ST}+S_{\lambda})=0\label{eq:}\end{equation}
where $S_{\lambda}\equiv\int d^{4}x\lambda(x)f[g_{\mu\nu}(x),\phi(x)]=\int d^{4}x\lambda(x)(\sqrt{g(x)}-1-\phi(x))$
and $\lambda(x)$ is taken as another dynamical field. That is to
say, the set of classical solutions of both theories, $S_{\text{TDiff}}$
and a {}``restricted $S_{ST}$'' coincide, let this be achieved
directly ---by definition--- imposing the restriction $\phi=\sqrt{g}-1$
in the configuration space, or by means of a Lagrange multiplier with
the extended action $S_{T}$.
\end{lem*}
\begin{proof}
The e.o.m.'s of the $S_{\text{TDiff}}$ theory, making use of the functional
relation $S_{\text{TDiff}}[g_{\mu\nu},\sqrt{g}-1;\,\psi]=S_{ST}[g_{\mu\nu},\phi(\sqrt{g}-1);\,\psi]$,
with $\phi(\sqrt{g}-1)= \sqrt{g}-1$, are: 
\begin{eqnarray}
0 & = & \frac{\delta S_{\text{TDiff}}}{\delta g_{\mu\nu}}\equiv\left[\frac{\delta S_{ST}}{\delta\phi}\frac{\delta\sqrt{g}}{\delta g_{\mu\nu}}+\frac{\delta S_{ST}}{\delta g_{\mu\nu}}\right]_{\phi=\sqrt{g}-1}\label{eq:}\\
0 & = & \frac{\delta S_{\text{TDiff}}}{\delta\psi}\equiv\left[\frac{\delta S_{ST}}{\delta\psi}\right]_{\phi=\sqrt{g}-1}\label{eq:}\end{eqnarray}
 to compare with the e.o.m.'s of the $S_{T}$ theory:\begin{eqnarray}
0 & = & \frac{\delta S_{T}}{\delta\lambda}\equiv\sqrt{g}-1-\phi\nonumber \\
0 & = & \frac{\delta S_{T}}{\delta\phi}\equiv\frac{\delta S_{ST}}{\delta\phi}-\lambda\label{eq:}\\
0 & = & \frac{\delta S_{T}}{\delta g_{\mu\nu}}\equiv\lambda\frac{\delta\sqrt{g}}{\delta g_{\mu\nu}}+\frac{\delta S_{ST}}{\delta g_{\mu\nu}}\nonumber \\
0 & = & \frac{\delta S_{T}}{\delta\psi}\equiv\frac{\delta S_{ST}}{\delta\psi}\nonumber \end{eqnarray}
By employing the first of these four equations in the others, and
eliminating $\lambda$ among the second and third equations, we reach
the above conclusion. 
\end{proof}
Following the main concept of this lemma, that $S_{\text{TDiff}}$ theory
is a restricted version of a $S_{ST}$ theory (previous to any variational
principle), the set of (classical) solutions of the $S_{\text{TDiff}}$ theory
would be greater in general than the normal, unrestricted $S_{ST}$
theory after gauge-fixing: this is a well-known mathematical property
of restricted variations, since the space of solutions of a restricted
theory is always bigger ---or equal--- to the unrestricted theory's
one after imposing the same condition (the two procedures, restricted
variations and restriction on the space of solutions of the unrestricted
variations, do not commute in general):\[
\{(g_{\mu\nu},\psi)|\,\delta S_{\text{TDiff}}[g_{\mu\nu};\psi]\equiv\left.\delta\right|_{\phi=\sqrt{g}-1}S_{ST}[\phi,g_{\mu\nu};\psi]=0\}\supseteq\left[\{(\phi,g_{\mu\nu},\psi)|\,\delta S_{ST}[\phi,g_{\mu\nu};\psi]=0\}\right]_{\phi=\sqrt{g}-1}\]
However, \emph{when the restriction corresponds to fixing a trivial
direction in the configuration space, the two procedures commute and
we have exactly the same set of solutions}.

The intuition comes from the case when instead of a functional $S_{ST}[\phi,g_{\mu\nu},\psi]$,
we have a function of the variables $V(\vec{x})$, say a potential,
whose extrema we want to find. Let us suppose that we have a certain
symmetry on the potential\begin{equation}
\delta_{\xi}V\equiv\delta_{\xi}\vec{x}\cdot\vec{\nabla}V=0\label{eq:symmetry_toy}\end{equation}
which we fix with a restriction $f(\vec{x})=0$ (so that $\delta_{\xi}f\equiv\delta_{\xi}\vec{x}\cdot\vec{\nabla}f\neq0$).
If we do impose the restriction before the calculus of the extrema
we will get, resorting to the Lagrange multipliers method:\begin{eqnarray}
\vec{\nabla}V+\lambda\vec{\nabla}f & = & 0\label{eq:on-shell_toy}\\
f & = & 0\nonumber \end{eqnarray}
Taking the scalar product of the first equation in the direction of
the symmetry $\delta_{\xi}\vec{x}$ we arrive to the equation\begin{equation}
\delta_{\xi}(\lambda f)\equiv\lambda\delta_{\xi}\vec{x}\cdot\vec{\nabla}f=0\label{eq:variation_of_lambdapotential_toy}\end{equation}
which indicates that $\lambda$ is $0$ on-shell and, therefore, one
could have imposed the restriction after the extremalization procedure,
just the same.

In the following, let us generalize this {}``toy-model'' to the
situation with a functional of the fields and, at the same time, particularize
to our specific case. In the first place, we require the action to
possess a symmetry under unbounded variations, $\tilde{\delta}$ (variations
that do not vanish at the boundary):\begin{equation}
\tilde{\delta}S[q_{i}(x)]\equiv\int dx[Euler-Lagrange]_{i}\tilde{\delta}q_{i}(x)+\oint d\Sigma_{\mu}\left[\frac{\partial L}{\partial(\partial_{\mu}q_{i})}\tilde{\delta}q_{i}(x)\right]_{\partial B}\label{eq:unbounded variations}\end{equation}
as opposed to the bounded variations $\delta$\begin{equation}
\delta S[q_{i}(x)]\equiv\int dx[Euler-Lagrange]_{i}\delta q_{i}(x)\label{eq:bounded variations}\end{equation}
where there is an implicit sum on the index $i$ of the various fields
$q_{i}(x)$. This symmetry is a null variation in a certain coordinate-dependent
{}``direction'' labeled $\xi$\begin{equation}
\tilde{\delta}_{\xi}S[q_{i}(x)]=0\label{eq:}\end{equation}
In passing, although the e.o.m.'s are obtained with the usual bounded
variations $\delta$, it is necessary to consider such a wider class
of variations $\tilde{\delta}$ in order to obtain the analogous of
\eqref{eq:variation_of_lambdapotential_toy}, as we will see later.
Now, since we have restricted ourselves to the case when the Lagrangian goes to zero at the boundary,
$\mathcal{L}_{ST}\underset{\partial B}{\rightarrow}0$,
we do have such a symmetry in our scalar-tensor action. For a general
infinitesimal (active) gauge transformation with spacetime parameter
$\xi^{\mu}(x)$ we obtain\begin{equation}
\left.\tilde{\delta}_{\xi}\right|_{\mathcal{L}_{ST}\underset{\partial B}{\rightarrow}0}S_{ST}=\left.\int d^{4}x\partial_{\mu}(\sqrt{g}\xi^{\mu}\mathcal{L}_{ST})\right|_{\mathcal{L}_{ST}\underset{\partial B}{\rightarrow}0}=\left.\oint d\Sigma_{\mu}[\sqrt{g}\xi^{\mu}\mathcal{L}]_{\partial B}\right|_{\mathcal{L}_{ST}\underset{\partial B}{\rightarrow}0}=0\label{eq:unbounded variation of Sst}\end{equation}
Correspondingly, in this situation we have a \emph{globally-defined}
gauge transformation that takes as from $\phi$ to $\sqrt{g}-1$ or,
what is the same, from $S_{ST}$ to $S_{\text{TDiff}}$ (the precise transformation
is detailed in Appendix C). 

In the second place, let us examine what the gauge unbounded variation
of $S_{\lambda}$ with the on-shell condition $\phi=\sqrt{g}-1$ entails
in our case:\begin{equation}
\left[\tilde{\delta}_{\xi}S_{\lambda}\right]_{\phi=\sqrt{g}-1}=\int\left[\tilde{\delta}_{\xi}g_{\mu\nu}\frac{\tilde{\delta}S_{\lambda}}{\tilde{\delta}g_{\mu\nu}}+\tilde{\delta}_{\xi}\phi\frac{\tilde{\delta}S_{\lambda}}{\tilde{\delta}\phi}\right]_{\phi=\sqrt{g}-1}=\label{eq:}\end{equation}
\begin{equation}
=\int\left[\left(\nabla_{\mu}\xi_{\nu}\right)\left(\sqrt{g}g^{\mu\nu}\lambda\right)+\left(\xi^{\mu}\partial_{\mu}\phi\right)\left(-\lambda\right)\right]_{\phi=\sqrt{g}-1}=\int\lambda\sqrt{g}\partial_{\mu}\xi^{\mu},\,\,\forall\xi^{\mu}\label{eq:intbyparts_nonpossible}\end{equation}
so that\begin{equation}
\left[\tilde{\delta}_{\xi}S_{\lambda}\right]_{\phi=\sqrt{g}-1}=0,\,\,\forall\xi^{\mu}\Longleftrightarrow\lambda=0\label{eq:condition for zero variation of Slambda}\end{equation}
in complete analogy with the situation in \eqref{eq:variation_of_lambdapotential_toy}.
Let us remark that this condition is not true when we just consider
the subset of bounded variations, so that one can perform an integration
by parts with a null boundary term in \eqref{eq:intbyparts_nonpossible}.

Finally, we need the unbounded version $\tilde{\delta}S_{T}=0$ of
the usual on-shell condition $\delta S_{T}=0$ to act as \eqref{eq:on-shell_toy}.
Here, the restriction $\mathcal{\mathcal{L}_{ST}\underset{\partial B}{\rightarrow}}0$
proves again essential. Take $\delta S_{T}=0$, so that the Euler-Lagrange
equations are satisfied, and consider the differences between \eqref{eq:unbounded variations}
and \eqref{eq:bounded variations}. The additional boundary term in
the former is zero in virtue of the restriction $\mathcal{\mathcal{L}_{ST}\underset{\partial B}{\rightarrow}}0$
and \emph{the fact that our gauge-fixing does not depend on the derivatives
of the fields}. Then, the Euler-Lagrange equations being fulfilled
in the bulk, one is left only with the contribution at the boundary
in \eqref{eq:unbounded variations}, a set of zero measure. Unless
subtleties involved, this should not contribute and we obtain:\begin{equation}
0=\left.\delta\right|_{\mathcal{L}_{ST}\underset{\partial B}{\rightarrow}0}S_{T}=\left.\tilde{\delta}\right|_{\mathcal{L}_{ST}\underset{\partial B}{\rightarrow}0}S_{T}=\left.\tilde{\delta}\right|_{\mathcal{L}_{ST}\underset{\partial B}{\rightarrow}0}S_{ST}+\tilde{\delta}S_{\lambda}.\label{eq:eom for St}\end{equation}

Combining the equations \eqref{eq:unbounded variation of Sst} and
\eqref{eq:eom for St}, with variations in a particular direction
$\xi$, we arrive to \begin{equation}
\left[\tilde{\delta}_{\xi}S_{\lambda}\right]_{\phi=\sqrt{g}-1}=0,\,\,\forall\xi^{\mu}\label{eq:}\end{equation}
Note that \eqref{eq:eom for St} implies, in particular, the on-shell
condition $\phi=\sqrt{g}-1$. Then, following \eqref{eq:condition for zero variation of Slambda}
we have $\lambda=0$ on-shell and therefore\begin{equation}
\left.\delta\right|_{\mathcal{L}_{ST}\underset{\partial B}{\rightarrow}0}S_{\text{TDiff}}\equiv\left.\delta\right|_{\mathcal{L}_{ST}\underset{\partial B}{\rightarrow}0;\,\,\phi=\sqrt{g}-1}S_{ST}=\left.\delta\right|_{\mathcal{L}_{ST}\underset{\partial B}{\rightarrow}0}S_{T}=0\Longleftrightarrow\left[\left.\delta\right|_{\mathcal{L}_{ST}\underset{\partial B}{\rightarrow}0}S_{ST}=0\right]_{\phi=\sqrt{g}-1}\label{eq:}\end{equation}

There is yet a further step that consists on proving%
\footnote{We would not need to show explicitly the corresponding statement $\left.\delta\right|_{\mathcal{L}_{ST}\underset{\partial B}{\rightarrow}0}S_{T}=0\Longleftrightarrow\left[\delta S_{T}=0\right]_{\mathcal{L}_{ST}\underset{\partial B}{\rightarrow}0}$
since the set of solutions of $\left[\delta S_{T}\equiv\left.\delta\right|_{\phi=\sqrt{g}-1}S_{ST}=0\right]_{\mathcal{L}_{ST}\underset{\partial B}{\rightarrow}0}$
is clearly contained in $\left.\delta\right|_{\mathcal{L}_{ST}\underset{\partial B}{\rightarrow}0}S_{T}=0$
and, at the same time, contains the solutions of $\left[\delta S_{ST}=0\right]_{\mathcal{L}_{ST}\underset{\partial B}{\rightarrow}0;\,\,\phi=\sqrt{g}-1}$.
Hence, it suffices to prove the equality of these two last sets.%
}\begin{equation}
\left[\left.\delta\right|_{\mathcal{L}_{ST}\underset{\partial B}{\rightarrow}0}S_{ST}=0\right]_{\phi=\sqrt{g}-1}\Longleftrightarrow\left[\delta S_{ST}=0\right]_{\mathcal{L}_{ST}\underset{\partial B}{\rightarrow}0;\,\,\phi=\sqrt{g}-1}\label{eq:}\end{equation}
or, in more generality,\begin{equation}
\left.\delta\right|_{\mathcal{L}\underset{\partial B}{\rightarrow}0}S=0\Longleftrightarrow\left[\delta S=0\right]_{\mathcal{L}\underset{\partial B}{\rightarrow}0}\label{eq:}\end{equation}
i.e., that the restriction $\mathcal{L}\underset{\partial B}{\rightarrow}0$
can be implemented equivalently before and after taking the variations.
The following lemma completes the proof.
\begin{lem*}
$\left.\delta\right|_{\mathcal{L}\underset{\partial B}{\rightarrow}0}S=0\Longleftrightarrow\left[\delta S=0\right]_{\mathcal{L}\underset{\partial B}{\rightarrow}0}$
\end{lem*}
\begin{proof}
Having a look at expression \eqref{eq:bounded variations}, one can
check that the condition $\mathcal{L}\underset{\partial B}{\rightarrow}0$
imposes no restriction to the values of the $\delta q_{i}(x)$ so
that the usual Euler-Lagrange equations are derived inside the region,
apart from the aforementioned condition. This amounts to equivalence
between both ways of calculating variations stated in the lemma.

Yet, let us impose the restriction $\mathcal{L}\underset{\partial B}{\rightarrow}0$
at the level of the action with the aid of a Lagrange multiplier.
That could be done in the following form:\begin{equation}
S_{L.m.}=\underset{\partial B}{\oint}dx_{b}\lambda(x_{b})\mathcal{L}(x_{b})=\int d^{4}x\lambda(x)\mathcal{L}(x)\delta(x-x_{b})\label{eq:}\end{equation}
\footnote{One simple picture to capture the situation is to imagine 3-dimensional
space and a boundary $S_{2}$ of radius unity. By using polar coordinates
the points $x_{b}$ would be spanned by the angular coordinates $\Omega$
(at fixed radius $r=1$). Then one would use the identity $\int dr\, r^{2}\delta(r-1)=1$
to obtain a three-dimensional integral.%
}If we now attempt to compute the e.o.m.'s corresponding to the extended
action $S_{ext}\equiv S+S_{L.m.}$, we would arrive at the expression
(see \ref{eq:unbounded variations}):\begin{eqnarray}
0=\delta S_{ext} & = & \int d^{4}x\left[[Euler-Lagrange]_{i}(x)+[Euler-Lagrange]_{i}(x)\lambda(x)\delta(x-x_{b})\right]\delta q_{i}(x)\label{eq:}\\
 &  & +\int d^{4}x\mathcal{L}(x)\delta(x-x_{b})\tilde{\delta}\lambda(x)\nonumber \end{eqnarray}
Note that, in principle, the variation of $\lambda$ does not have
to be zero at the boundary. Also, one can see that the restriction
does not lead to any modification of the equations in the bulk ($x\neq x_{b}$).
Indeed, one finally arrives to\begin{eqnarray}
[Euler-Lagrange]_{i}(x) & = & 0,\,\,\,\,\,\forall i,\, x\neq x_{b}\label{eq:}\\
\mathcal{L}(x_{b}) & = & 0\nonumber \end{eqnarray}
which is just the same set of e.o.m.'s that the expression $\left[\delta S=0\right]_{\mathcal{L}\underset{\partial B}{\rightarrow}0}$
would render.
\end{proof}
\end{proof}
Let us make some comments following the theorem. First, a totally
analogous theorem exists which connects GR and unimodular gravity
\eqref{eq:unimodular gravity}. Indeed, the action for unimodular
gravity can be written with the help of a Lagrange multiplier as in
Appendix B ($\epsilon_{0}(x)=1$)\begin{equation}
S_{UG}=\int d^{4}x\left[\sqrt{g(x)}R(x)+\lambda(x)(\sqrt{g(x)}-1)\right]\label{eq:unimodular gravity with Lagrange multiplier}\end{equation}
and, following a similar reasoning, one can establish the equivalence
of the solutions of both theories when the Lagrangian goes to zero
at the boundary. This result is already well-known, since the integration
constant playing the role of a cosmological constant that appears
in unimodular gravity --- with respect to GR --- is non-zero at the
boundary. 

Actually, a far more general conclusion can be drawn from our line
of reasoning: \emph{for any Lagrangian theory with a globally-defined
gauge symmetry, it is equivalent to impose a gauge-fixing without
derivatives before or after the calculation of variations, provided
that we only consider (classical) solutions that make the Lagrangian
vanish at the boundary of spacetime}. This adds up to the result in
\cite{Pons:1995ss,Pons:2009ch} (Hamiltonian and Lagrangian treatment,
respectively), that {}``the effect of plugging the gauge fixing constraint
into the Lagrangian can be compensated by adding to the equations
of motion for the reduced {[}restricted] theory some constraints that
have disappeared as such along the process''%
\footnote{The paradigmatic example is the temporal gauge $A_{0}=0$ of electromagnetism
which, if plugged directly in the action before the calculus of variation,
leads to the disappearance of the Gauss constraint, $\partial_{\alpha}F^{\alpha0}=0$,
although it is kept compatible with the dynamical (Hamiltonian-picture)
evolution. Our boundary condition on the Lagrangian restores it. On
the other hand, the Lorenz gauge $\partial_{\mu}A^{\mu}=0$, including
a derivative, escapes our conclusion.%
}. 

Second, this theorem somehow contradicts common beliefs of other authors. On the other hand, the results in reference \cite{Pirogov:2009hr} are compatible with the theorem; this latter work is a previous study on the relation between TDiff (called "unimodular metagravity") and scalar-tensor from another perspective\footnote{In \cite{Pirogov:2009hr}, the author finds solutions for the TDiff theory (referred
to as \char`\"{}unimodular metagravity\char`\"{}) that are not present
in the \emph{corresponding} scalar-tensor theory (corresponding in
the sense stipulated by our theorem). These solutions are parametrized
by a non-zero value of the parameter $\lambda_{h}$ in equation (9).
According to our theorem, they should all imply a non-zero value of
the Lagrangian at the boundary of spacetime. We are not going to find
all the solutions, but we note that the trivial configuration ($g_{\mu\nu}\rightarrow\eta_{\mu\nu}$,
$\chi\rightarrow0$, $\Lambda=0$) at the boundary, which makes the
Lagrangian vanish, is a solution only for $\lambda_{h}=0$.

Regarding the $\lambda_{h}=0$ case, the author has confirmed the equivalence between both theories 
[private communication].%
}.

Third, we have expressed the explicit dependence on $g$ of the TDiff
theory by $\sqrt{g}-1$ throughout the paper, so that it vanishes
in Minkowski space with (canonical) cartesian coordinates. Whether
the requirement $\mathcal{L}\underset{\partial B}{\rightarrow}0$
is {}``natural'' or too restrictive is left to the reader.

Fourth and last, we have used a Lagrangian formalism throughout the proof. It would be desirable
to complement this point of view by resorting to the Hamiltonian formalism of constrained systems, as in 
\cite{Henneaux:1989zc} for the case of unimodular gravity.	 

\section{Energy-Momentum Tensors}

In this section, we want to compare the energy-momentum tensors naturally defined in both theories: scalar-tensor and TDiff. Are these the same quantity? how do they relate to each other? For the comparison, we will use a "gauge-fixed" scalar-tensor, although we know that this is physically-equivalent to the covariant one, as was stated in the introduction.

The Rosenfeld prescription for the energy-momentum tensor\begin{equation}
T^{\mu\nu}\equiv\frac{1}{\sqrt{g}}\frac{\delta S_{M}}{\delta g_{\mu\nu}}\label{eq:}\end{equation}
(where $S_{M}$ stands for the matter action) guarantees that this
tensor is conserved on-shell, $\nabla_{\mu}T^{\mu\nu}=0$,\emph{ in
any Lagrangian-based generally covariant metric theory} \emph{without
absolute non-dynamical variables,} through the gravitational-field equations \cite{LeeConservlaws:1974nq}
(and also directly, by the definition of the matter Lagrangian in
a metric theory itself); this is the case of a metric scalar-tensor
theory. Let's compare what the Rosenfeld prescription entails for
both theories, {}``gauge-fixed scalar-tensor'' and TDiff:

\begin{equation}
\left.T_{ST}^{\mu\nu}\right|_{\phi=\sqrt{g}-1}\equiv\frac{1}{\sqrt{g}}\left[\frac{\delta S_{M,ST}}{\delta g_{\mu\nu}}\right]_{\phi=\sqrt{g}-1}\label{eq:energymomentum tensor ST}\end{equation}

\begin{equation}
T_{\text{TDiff}}^{\mu\nu}\equiv\frac{1}{\sqrt{g}}\frac{\delta S_{M,\text{TDiff}}}{\delta g_{\mu\nu}}=\frac{1}{\sqrt{g}}\frac{\left.\delta\right|_{\phi=\sqrt{g}-1}S_{M,ST}}{\delta g_{\mu\nu}}\label{eq:energymomentum tensor TDiff}\end{equation}

A rough summary of the theorem in the previous section would be that,
under certain boundary conditions, the restriction $\phi=\sqrt{g}-1$
can be taken before or after the variations in the action:\begin{equation}
\left[\delta S_{M,ST}\right]_{\phi=\sqrt{g}-1}\sim\left.\delta\right|_{\phi=\sqrt{g}-1}S_{M,ST}\label{eq:}\end{equation}
However, note that this equality does not extend to the energy-momentum
tensors as previously defined:\begin{equation}
\left[\frac{1}{\sqrt{g}}\frac{\delta S_{M,ST}}{\delta g_{\mu\nu}}\right]_{\phi=\sqrt{g}-1}\neq\frac{1}{\sqrt{g}}\frac{\left[\delta S_{M,ST}\right]_{\phi=\sqrt{g}-1}}{\delta g_{\mu\nu}}=\frac{1}{\sqrt{g}}\frac{\left.\delta\right|_{\phi=\sqrt{g}-1}S_{M,ST}}{\delta g_{\mu\nu}}\label{eq:}\end{equation}
Indeed, a direct computation with the generic matter action $S_{M,ST}[\phi,g_{\mu\nu};\,\psi]$
gives us\begin{equation}
T_{\text{TDiff}}^{\mu\nu}=\frac{1}{\sqrt{g}}\left[\frac{\delta S_{M,ST}}{\delta g_{\mu\nu}}\right]_{\phi=\sqrt{g}-1}+\frac{1}{\sqrt{g}}\left[\frac{\delta S_{M,ST}}{\delta\phi}\right]_{\phi=\sqrt{g}-1}\frac{1}{2}\sqrt{g}g^{\mu\nu},\label{eq:}\end{equation}
to be compared with \eqref{eq:energymomentum tensor ST}, and note
that the second term does not have to vanish in general. We thus have
\emph{two different} \emph{definitions} of the energy-momentum tensor.

In conclusion, the strict use of the Rosenfeld prescription in both theories does lead to different energy-momentum tensors, even if the theories themselves are equivalent: the Rosenfeld energy-momentum tensor depends on the representation of the theory. Once one accepts the correspondence between scalar-tensor and TDiff theory, there is not really a compelling reason to use $T_{\text{TDiff}}^{\mu\nu}$, which does not share with $T_{ST}^{\mu\nu}$ the property of being conserved.

\section{Flat Limit}

In this section, we want to compare the (two different kinds of) flat limits in both theories: scalar-tensor and TDiff. We should and, indeed, we do obtain the same behaviour, in agreement with their established equivalence. The (false) impression that a flat limit of TDiff should just amount to the substitution of the determinant of the metric $\sqrt{g}$ by unity, is deceiving. 

\subsection{Global flat limit (weak field)}

The (global) flat limit is defined by $\mathbf{\boldsymbol{g}}\rightarrow\boldsymbol{\xi}$
with $R[\boldsymbol{\xi}]=0$, i.e., the Riemann curvature tensor
of the metric tensor vanishes. In coordinates: $g_{\mu\nu}(x)\underset{g.f.l.}{\rightarrow}\xi_{\mu\nu}(x)$,
with $R_{\mu\nu\lambda\gamma}[\xi](x)=0$. The Minkowski form of the
metric ($\eta_{\mu\nu}(x)=diag(-1,1,1,1)$) is just one member of
the set of flat coordinate metrics: $\eta_{\mu\nu}(x)\in\{\xi_{\mu\nu}(x)\}$.

\begin{itemize}
\item Scalar-tensor:

$S_{M}=\int dV\sqrt{g}f(\phi)(\frac{1}{2}g^{\mu\nu}\partial_{\mu}\Psi\partial_{\mu}\Psi-V(\Psi))\underset{g.f.l.}{\rightarrow}S_{M}=\int dV\sqrt{\xi}f(\phi)(\frac{1}{2}\xi^{\mu\nu}\partial_{\mu}\Psi\partial_{\mu}\Psi-V(\Psi))=$\\
$\underset{Diff\, invariance}{=}\int dV\sqrt{\eta}f(\phi)(\frac{1}{2}\eta^{\mu\nu}\partial_{\mu}\Psi\partial_{\mu}\Psi-V(\Psi))=\int dV\,\, f(\phi)(\frac{1}{2}\eta^{\mu\nu}\partial_{\mu}\Psi\partial_{\mu}\Psi-V(\Psi))$

\item TDiff:\\
We formulate the action in a certain set of coordinates (characterized
by the condition that the $J=1$).

$S_{M}=\int dV\sqrt{g}f(\sqrt{g}-1)(\frac{1}{2}g^{\mu\nu}\partial_{\mu}\Psi\partial_{\mu}\Psi-V(\Psi))\underset{g.f.l.}{\rightarrow}S_{M}=\int dV\sqrt{\xi}f(\sqrt{\xi}-1)(\frac{1}{2}\xi^{\mu\nu}\partial_{\mu}\Psi\partial_{\mu}\Psi-V(\Psi))=$\\
$\underset{coord.\, change}{=}\int dV\sqrt{\eta}f(J^{-1}\sqrt{\eta}-1)(\frac{1}{2}\eta^{\mu\nu}\partial_{\mu}\Psi\partial_{\mu}\Psi-V(\Psi))=\int dV\,\, f(J^{-1}\sqrt{\eta})(\frac{1}{2}\eta^{\mu\nu}\partial_{\mu}\Psi\partial_{\mu}\Psi-V(\Psi))$\\
A Jacobian of the coordinate transformation with respect to the original
set of coordinates, $J$, has appeared in a general set of coordinates. The quantity $J^{-1}\sqrt{\eta}-1$
behaves as a (Diff) scalar. If we set its value to a constant in a
certain set of coordinates, it will be the same constant also in every
other set of coordinates / reference frame. Therefore, much as in
scalar-tensor theory, we can not really get rid of it, unless we want
to make the whole theory trivial. 

\end{itemize}

\subsection{Local flat limit ({}``free fall'').}

It is always possible to find a coordinate system where the first
derivatives of the metric --- or equivalently, the Christoffels ---
vanish along a worldline, which we can take timelike $\mathcal{P}(\tau)$,
such that in the new coordinate system $g_{\mu\nu}\rightarrow g_{\hat{\mu}\hat{\nu}}$
one has the {}``free fall'' approximation\[
g_{\hat{\mu}\hat{\nu}}(x)=\eta_{\hat{\mu}\hat{\nu}}(\mathcal{P}(\tau))+o((x-\mathcal{P}(\tau))^{2}\partial^{2}g_{\hat{\mu}\hat{\nu}})\]

If we start from a TDiff action $S_{M}=\int dV\sqrt{g}f(\sqrt{g}-1)(\frac{1}{2}g^{\mu\nu}\partial_{\mu}\Psi\partial_{\mu}\Psi-V(\Psi))$
and change to the {}``caret'' coordinates, we will introduce Jacobians:\[
S_{M}=\int dV\sqrt{g}f(\sqrt{g}-1)(\frac{1}{2}g^{\mu\nu}\partial_{\mu}\Psi\partial_{\mu}\Psi-V(\Psi))\underset{coord.\, change}{=}\int dV\sqrt{\hat{g}}f(J^{-1}\sqrt{\hat{g}}-1)(\frac{1}{2}\hat{g}^{\mu\nu}\partial_{\mu}\Psi\partial_{\mu}\Psi-V(\Psi))\]
So, when we finally do a Taylor-like expansion around $\mathcal{P}(\tau)$
(we take the fields to {}``live'' in a local neighborhood of the
timelike worldline)\[
S_{m}\simeq\int dV\sqrt{\hat{\eta}}f(J^{-1}\sqrt{\hat{\eta}}-1)(\frac{1}{2}\hat{\eta}^{\mu\nu}\partial_{\mu}\Psi\partial_{\mu}\Psi-V(\Psi))=\int dV\,\, f(J^{-1}\sqrt{\hat{\eta}}-1)(\frac{1}{2}\hat{\eta}^{\mu\nu}\partial_{\mu}\Psi\partial_{\mu}\Psi-V(\Psi))\]
the scalar $J^{-1}\sqrt{\hat{\eta}}-1$ is present, much like the
corresponding $\phi$ in scalar-tensor theory.

\section{Conclusions}

Let us summarize our results. We started by imposing a TDiff gauge
symmetry group to the gravitational Lagrangian based on a symmetric
rank-two tensor $g_{\mu\nu}$, as opposed to the standard Diff symmetry
that leads us to General Relativity. Since the symmetry group is smaller,
we thus get to a more general Lagrangian \eqref{eq:TDiff action}\eqref{eq:TDiff matter action}.
Then, the theorem stated in section 3 tells us that, modulo some boundary
conditions, one can view this theory just as a general scalar-tensor
theory that has been presented in a particular gauge, namely $\phi=\sqrt{g}-1$,
but whose inherent symmetry group is again Diff (it is {}``physically-equivalent'',
in the sense of the introduction). At the end of the day, we have
{}``gained'' a scalar degree of freedom that was not postulated,
but appears automatically due to the lack of symmetry. In a way,\emph{
the TDiff group can be thus considered to give rise to general-scalar
tensor theory}. This present knowledge builds up from previous works
\cite{AlvarezAntonVillarejo:2008zw,AlvarezAntonVillarejo:2009ga}.
In addition, some collateral consequences of the theorem with a wider
scope of application are presented at the end of section 3: these
confront the question of {}``whether it makes a difference to impose
the gauge-fixing before or after the calculus of variations'' 

Regarding the observational signatures of TDiff theories at the classical
level, one can therefore rely on this equivalence to take advantage
from the abundant literature that studies the bounds on the scalar-tensor
theory. The same metric / universal coupling requirement that is usually
imposed for scalar-tensor theory --- which was mentioned in the introduction
--- can be applied directly in the TDiff Lagrangian \eqref{eq:TDiff action}\eqref{eq:TDiff matter action}
by considering only matter Lagrangians without explicit dependence
on $g$.

\begin{acknowledgement*}
I am grateful to Enrique Alvarez and Anton F. Faedo for our previous
work on the subject, that opened this interesting perspective. I have
largely benefited from discussion with Josep M. Pons, and from his
comments. I also want to thank Daniel Zenh\"{a}usern for his interest and help with the project during my stay at EPFL, Lausanne. On the other hand, I missed some more support from the mathematics
community that I consulted. This work has been partially supported by the European
Commission (HPRN-CT-200-00148) as well as by FPA2009-09017 (DGI del
MCyT, Spain) and S2009ESP-1473 (CA Madrid). The author is personally
supported by a MEC grant, AP2007-00385.

\end{acknowledgement*}
\appendix

\section{Active symmetry group vs. passive covariance group.}

We want to distinguish two different ways in which the diffeomorphism
group (Diff or TDiff) enters our considerations, although this separation
is not standard: there is the concept of the (passive) covariance
group of the representation, an invariance of form of the theory's
Lagrangian when the spacetime coordinates are transformed; then, there
is also the associated {}``active covariance group'', which one
can consider a gauge symmetry group and is just the active version
of the former transformation on the dynamical fields (not the background
quantities) that leaves the action invariant. Both groups do not necessarily
have to coincide, and the difference lies precisely in our treatment
of the background quantities.

For example, the quantity \begin{equation}
\int d^{4}x\sqrt{g}g\label{eq:}\end{equation}
is both (just) TDiff covariant with respect to coordinate transformations
and has (just) a TDiff gauge symmetry group. However, the quantity

\begin{equation}
\int d^{4}x\sqrt{g}\frac{g}{\tilde{g}},\label{eq:}\end{equation}
which includes a background density field $\tilde{g}$, is fully generally
covariant with respect to coordinate transformations, while its gauge
symmetry group remains (just) TDiff, due to the presence of a fixed
absolute prior-geometric object.

As can be inferred from the example, the gauge symmetry implies the
covariance under the associated coordinate transformation, but not
vice versa.

\section{Unimodular gravity vs General Relativity}

It is widely known that unimodular theories are classically equivalent
to General Relativity when the latter is {}``gauge-fixed'' with
the condition%
\footnote{It does not matter whether we take $\sqrt{g}$ or $g$ to 1 \cite{BijDamNg:1981ym}%
} $g=1$, safe for an arbitrary cosmological constant term that emerges
in the form of an integration constant. One can immediately see this
by using the Lagrange multiplier procedure to implement the constraint
on the determinant of the metric:

\begin{equation}
S_{UG}\equiv-\frac{1}{2k^{2}}\int d^{4}x\left[\sqrt{g}R+\sqrt{g}L_{SM}\left[\psi_{m},g_{\mu\nu}\right]\right]_{\sqrt{g}=1}\label{eq:}\end{equation}
\begin{equation}
\Longrightarrow S_{UG,ext}\equiv S_{GR}+S_{UG,\lambda}=-\frac{1}{2k^{2}}\int d^{4}x\left[\sqrt{g}R+\sqrt{g}L_{SM}\left[\psi_{m},g_{\mu\nu}\right]\right]+\int d^{4}x\lambda(\sqrt{g}-1)\label{eq:}\end{equation}
As compared to the GR template, the only difference is the appearance
in the e.o.m for $g_{\mu\nu}$,$\frac{\delta S_{UG,\lambda}}{\delta g_{\mu\nu}}=0$,
of an extra term of the type $\sim\sqrt{g(x)}\lambda(x)g^{\mu\nu}(x)$
; furthermore, the (on-shell) Bianchi identity stemming from the Diff-invariant
terms of the action forces $\lambda(x)=\lambda=const.$ Then, the
equation for $\lambda$, $\frac{\delta S_{UG,\lambda}}{\delta\lambda}=0$,
fixes the \char`\"{}gauge\char`\"{} $\sqrt{g}=1$.

\section{Gauge transformation between $\phi$ and $\sqrt{g}-1$}

Under an (active) coordinate gauge transformation, we have the transformation
rules:\begin{equation}
\sqrt{g'(x)}-1=J^{-1}\sqrt{g(x)}-1+\left[\sqrt{g'(x)}-\sqrt{g'(x')}\right],\,\, with\,\,\,\sqrt{g'(x')}=J^{-1}\sqrt{g(x)},\label{eq:}\end{equation}
\begin{equation}
\phi'(x)=\phi(x)+\left[\phi'(x)-\phi'(x')\right],\,\, with\,\,\,\phi'(x')=\phi(x),\label{eq:}\end{equation}
which correspond to the Lie derivative at the infinitesimal level,\begin{equation}
\sqrt{g'(x)}-1=\sqrt{g(x)}-1+\sqrt{g(x)}\partial_{a}\xi^{a}(x)+\xi^{a}(x)\partial_{a}\sqrt{g(x)}=\sqrt{g(x)}+\partial_{a}(\sqrt{g(x)}\xi^{a}(x))=\sqrt{g(x)}+\sqrt{g(x)}\nabla_{a}\xi^{a}(x)\label{eq:}\end{equation}
\begin{equation}
\phi'(x)=\phi(x)+\xi^{a}(x)\partial_{a}\phi(x)\label{eq:}\end{equation}
We see that taking the jacobian $J(x)=\sqrt{g(x)}/\left(\phi(x)+1\right)$
we will obtain equality among $\phi'$ and $\sqrt{g'}-1$ at each
point of spacetime (the {}``transport terms'' in squared brackets
will be equal).\\
\\

\end{document}